\documentclass[11pt,a4paper]{article}
\usepackage[margin=1in]{geometry}
\usepackage{amsmath,amssymb,amsthm}
\usepackage[hidelinks]{hyperref}
\usepackage{xcolor} 

\newtheorem{theorem}{Theorem}[section]
\newtheorem{proposition}[theorem]{Proposition}

\theoremstyle{remark}

\theoremstyle{definition}

\title{Kinematic selection of the viscous stress\\
in relativistic dissipative hydrodynamics}
\author{{Zhi-Wei Wang}${}^{1,*}$ and Samuel L.\ Braunstein${}^{2,\dagger}$\\[6pt]
	\small ${}^{1}$College of Physics, Jilin University, Changchun 130012, China\\
	\small ${}^{2}$Computer Science, University of York, York YO10 5DD, UK\\[6pt]
	\small $^*$E-mail: {zhiweiwang.phy@gmail.com}
	\small $^\dagger$E-mail: {sam.braunstein@york.ac.uk}}
\date{\today}

\begin{document}
\maketitle

\begin{abstract}
All standard formulations of relativistic dissipative hydrodynamics,
from Eckart through Israel-Stewart to the recent BDNK framework,
assume that the viscous stress depends on the shear tensor
$\sigma_{\alpha\beta}$ and the expansion scalar $\theta$ but not on
the vorticity $\omega_{\alpha\beta}$ or the acceleration
$a_\alpha$. We derive this structure from a Lagrangian kinematic
construction on Lorentzian spacetimes, extending a recent result on
Riemannian manifolds. The spatial strain rate, constructed from
the rate of change of spatial inner products of Lie-dragged
connecting vectors, is the spatially projected Lie derivative of the
projected metric $h_{\alpha\beta} = g_{\alpha\beta} + u_\alpha
u_\beta$. The acceleration terms drop out exactly under spatial
projection, and the vorticity cancels by symmetry. We show that material
frame-indifference fails for generic Killing perturbations by an
amount $\delta\mathfrak{h}_{\alpha\beta} = +\epsilon(\xi_\alpha
a_\beta + \xi_\beta a_\alpha)$ proportional to the acceleration,
and is restored only for flow-preserving isometries. We prove that
the non-relativistic limit of the BDNK equations gives the
deformation Laplacian universally in the viscous sector, with the
BDNK parameter dependence identified by Hegade K R, Ripley, and
Yunes arising entirely from the thermal (heat-flux) sector. As an
application, we derive the Weinberg gravitational-wave damping
formula directly from the kinematic strain rate in a perturbed FRW
spacetime.
\end{abstract}

\section{Introduction}
\label{sec:intro}

The formulation of relativistic dissipative hydrodynamics has been a
long-standing theoretical challenge. The original theories of
Eckart~\cite{Eck40} and Landau-Lifshitz~\cite{LL59} suffer from
acausal signal propagation and unstable
equilibria~\cite{HiscockLindblom}. The Israel-Stewart
framework~\cite{IS79} restores causality by introducing relaxation
times but adds free parameters. The recent BDNK framework of
Bemfica, Disconzi, and Noronha~\cite{BDN22,BDNK-review} provides a
first-order causal theory, but with a parameter space constrained by
thermodynamic stability and causality without being fully determined.

All of these theories share a common structural assumption: the
viscous part of the dissipative stress-energy tensor depends on the
shear tensor $\sigma_{\alpha\beta}$ and the expansion scalar
$\theta$ (the symmetric, trace-free and trace parts of the projected
velocity gradient), but not on the vorticity
$\omega_{\alpha\beta}$ (the antisymmetric part) or the acceleration
$a_\alpha = u^\beta\nabla_\beta u_\alpha$. This assumption is
motivated by flat-space intuition and thermodynamic arguments but
has not been derived from a kinematic construction in the
relativistic setting.

In a companion paper~\cite{WB2024}, we established a kinematic
selection principle for the viscous operator on Riemannian manifolds:
the rate-of-strain tensor, constructed from the rate of change of
inner products of Lie-dragged connecting vectors, is the Lie
derivative $\mathcal{L}_u g$. This is symmetric by construction (no
vorticity), intrinsic (no embedding), and leads uniquely to the
deformation Laplacian via isotropic Hookean closure.

In this paper we carry out the analogous construction on a Lorentzian
spacetime. The results are:

(i)~The spatial strain rate on a Lorentzian background is
$\mathfrak{h}_{\alpha\beta} = h_\alpha{}^\mu h_\beta{}^\nu
(\mathcal{L}_u h)_{\mu\nu} = 2\sigma_{\alpha\beta}
+ \frac{2}{3}\theta\,h_{\alpha\beta}$, with the acceleration
dropping out exactly under spatial projection and the vorticity
cancelling by symmetry. This provides a kinematic derivation of the
structure assumed in all relativistic dissipative theories.

(ii)~Material frame-indifference under Killing perturbations fails
by an amount proportional to the acceleration, and is restored only
for flow-preserving isometries or geodesic flows.

(iii)~The non-relativistic limit of the BDNK equations gives the
deformation Laplacian universally in the viscous sector, with the
BDNK parameter dependence identified by Hegade K R, Ripley, and
Yunes~\cite{HKRRY23} arising entirely from the heat-flux sector.

(iv)~The Weinberg gravitational-wave damping
formula~\cite{Weinberg72} follows directly from the kinematic strain
rate in a perturbed FRW spacetime.

\section{The kinematic argument on Lorentzian backgrounds}
\label{sec:kinematic}

\subsection{Setup}

Let $(M^4, g_{\alpha\beta})$ be a Lorentzian spacetime with signature
$(-,+,+,+)$ and let $u^\alpha$ be the four-velocity of the fluid,
normalised so that $u_\alpha u^\alpha = -1$. The spatial metric
\begin{equation}\label{eq:h}
h_{\alpha\beta} = g_{\alpha\beta} + u_\alpha u_\beta
\end{equation}
projects onto the local rest space: $h_\alpha{}^\beta u_\beta = 0$.
The covariant derivative of $u_\alpha$ decomposes as
\begin{equation}\label{eq:ehlers}
\nabla_\alpha u_\beta = \sigma_{\alpha\beta}
+ \tfrac{1}{3}\theta\,h_{\alpha\beta}
+ \omega_{\alpha\beta} - u_\alpha a_\beta,
\end{equation}
where $\theta = \nabla_\alpha u^\alpha$ is the expansion,
$\sigma_{\alpha\beta}$ is the shear (symmetric trace-free projected
part), $\omega_{\alpha\beta}$ is the vorticity (antisymmetric
projected part), and $a_\alpha = u^\beta\nabla_\beta u_\alpha$ is the
acceleration.

\subsection{Connecting vectors and the spatial strain rate}

Take a one-parameter family of nearby worldlines labelled by
$\lambda$, with connecting vector
$s^\alpha = \partial\phi_\tau(x_0(\lambda))/\partial\lambda
|_{\lambda=0}$. The connecting vector is Lie-dragged:
$[u,s] = 0$, giving $\nabla_u s = \nabla_s u$ by
torsion-freeness. The spatial separation is measured by
$h(s,s') = h_{\alpha\beta}\,s^\alpha(s')^\beta$.

The rate of change of this spatial inner product along the flow is
captured by the Lie derivative of $h$ along $u$, which we now
compute directly.

\subsection{The Lie derivative of the spatial metric}

\begin{proposition}\label{prop:lie-h}
The Lie derivative of the spatial metric along the fluid
four-velocity is
\begin{equation}\label{eq:lie-h}
(\mathcal{L}_u h)_{\alpha\beta} = \nabla_\alpha u_\beta
+ \nabla_\beta u_\alpha + u_\alpha a_\beta + u_\beta a_\alpha.
\end{equation}
\end{proposition}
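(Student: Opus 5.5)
The plan is to use linearity of the Lie derivative on the decomposition \eqref{eq:h}, $h_{\alpha\beta}=g_{\alpha\beta}+u_\alpha u_\beta$, and then compute each piece with the torsion-free Levi-Civita connection. Since the connection is torsion-free, the Lie derivative of any covariant tensor may be written with $\nabla$ in place of partial derivatives. For the metric piece, metric compatibility $\nabla g=0$ immediately gives
\begin{equation*}
(\mathcal{L}_u g)_{\alpha\beta}=\nabla_\alpha u_\beta+\nabla_\beta u_\alpha .
\end{equation*}

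For the second piece, the Leibniz rule gives $\mathcal{L}_u(u_\alpha u_\beta)=(\mathcal{L}_u u^\flat)_\alpha u_\beta+u_\alpha(\mathcal{L}_u u^\flat)_\beta$, where $u^\flat$ is the one-form $u_\alpha$. The one-form formula reads
\begin{equation*}
(\mathcal{L}_u u^\flat)_\alpha=u^\mu\nabla_\mu u_\alpha+u_\mu\nabla_\alpha u^\mu .
\end{equation*}
The first term is $a_\alpha$ by definition. The second equals $\tfrac12\nabla_\alpha(u_\mu u^\mu)=\tfrac12\nabla_\alpha(-1)=0$, by the normalisation $u_\alpha u^\alpha=-1$. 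Hence $\mathcal{L}_u u^\flat=a$, so $\mathcal{L}_u(u_\alpha u_\beta)=a_\alpha u_\beta+u_\alpha a_\beta$. Adding the two pieces yields \eqref{eq:lie-h}.

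As a cross-check I would substitute the decomposition \eqref{eq:ehlers}. The symmetrised gradient contributes $-u_\alpha a_\beta-u_\beta a_\alpha$, which cancels exactly the added acceleration terms. This leaves $2\sigma_{\alpha\beta}+\tfrac23\theta h_{\alpha\beta}$, consistent with the later claim (i). The only step needing care is the one-form Lie derivative: one must get the index placement right and use the normalisation to kill the $u_\mu\nabla_\alpha u^\mu$ term. Everything else is routine.
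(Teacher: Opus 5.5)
Your proof is correct and follows the paper's argument step for step: linearity on $h=g+u\otimes u$, metric compatibility for $\mathcal{L}_u g$, the Leibniz rule, and $\mathcal{L}_u u_\alpha=a_\alpha$ from the one-form formula together with the normalisation $u_\gamma u^\gamma=-1$. Your cross-check is also right, and it shows slightly more than the paper states. The unprojected $\mathcal{L}_u h$ already equals $2\sigma_{\alpha\beta}+\tfrac{2}{3}\theta h_{\alpha\beta}$, since $u^\alpha(\mathcal{L}_u h)_{\alpha\beta}=0$, so the spatial projection in Theorem~\ref{thm:strain} acts trivially.
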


\begin{proof}
Since $h_{\alpha\beta} = g_{\alpha\beta} + u_\alpha u_\beta$ and
$\mathcal{L}_u g_{\alpha\beta} = \nabla_\alpha u_\beta
+ \nabla_\beta u_\alpha$ (by metric compatibility), we need only
compute $\mathcal{L}_u(u_\alpha u_\beta)$. By the Leibniz rule,
$\mathcal{L}_u(u_\alpha u_\beta) = u_\beta\mathcal{L}_u u_\alpha
+ u_\alpha\mathcal{L}_u u_\beta$. The Lie derivative of the
four-velocity along itself is $\mathcal{L}_u u_\alpha
= u^\gamma\nabla_\gamma u_\alpha + u_\gamma\nabla_\alpha u^\gamma$.
Since $u_\gamma u^\gamma = -1$, we have
$u_\gamma\nabla_\alpha u^\gamma = \frac{1}{2}\nabla_\alpha
(u_\gamma u^\gamma) = 0$, so $\mathcal{L}_u u_\alpha = a_\alpha$.
Combining: $(\mathcal{L}_u h)_{\alpha\beta}
= \nabla_\alpha u_\beta + \nabla_\beta u_\alpha
+ u_\alpha a_\beta + u_\beta a_\alpha$.
\end{proof}

\subsection{The central result: spatial projection}

\begin{theorem}\label{thm:strain}
The spatial strain rate, defined as the spatial projection of
$\mathcal{L}_u h$, is
\begin{equation}\label{eq:strain}
\mathfrak{h}_{\alpha\beta} \equiv h_\alpha{}^\mu h_\beta{}^\nu
(\mathcal{L}_u h)_{\mu\nu}
= 2\sigma_{\alpha\beta}
+ \tfrac{2}{3}\theta\,h_{\alpha\beta}.
\end{equation}
Neither the vorticity $\omega_{\alpha\beta}$ nor the acceleration
$a_\alpha$ contributes.
\end{theorem}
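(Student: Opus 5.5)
The plan is to start from Proposition~\ref{prop:lie-h} and apply the projector $h_\alpha{}^\mu h_\beta{}^\nu$ to each of its four terms separately. The only facts needed are the projection identities $h_\alpha{}^\mu u_\mu = 0$ and $h_\alpha{}^\mu h_\mu{}^\nu = h_\alpha{}^\nu$, together with the orthogonality relations implicit in the decomposition~\eqref{eq:ehlers}. These relations are that $\sigma_{\alpha\beta}$, $\omega_{\alpha\beta}$ and $h_{\alpha\beta}$ are spatial in both indices, and that $u^\alpha a_\alpha = 0$, which follows from differentiating $u_\alpha u^\alpha=-1$ along $u$.

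\emph{The acceleration terms.} The terms $u_\mu a_\nu + u_\nu a_\mu$ each carry a free factor $u$ in one slot. Contracting with $h_\alpha{}^\mu h_\beta{}^\nu$ puts $h_\alpha{}^\mu u_\mu = 0$ (or its $\beta$ analogue) into every term, so this whole contribution vanishes identically.

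\emph{The symmetrised gradient.} Substituting~\eqref{eq:ehlers} gives
\[
\nabla_\mu u_\nu + \nabla_\nu u_\mu = 2\sigma_{\mu\nu} + \tfrac{2}{3}\theta\, h_{\mu\nu} - u_\mu a_\nu - u_\nu a_\mu .
\]
Two things happen here. The vorticity drops out because $\omega$ is antisymmetric, so $\omega_{\mu\nu}+\omega_{\nu\mu}=0$. The shear and the trace part are already spatial, so the projector leaves them unchanged. The remaining terms $-u_\mu a_\nu - u_\nu a_\mu$ are again annihilated by the projection, for the same reason as before.

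Adding the two pieces yields $\mathfrak{h}_{\alpha\beta} = 2\sigma_{\alpha\beta} + \tfrac{2}{3}\theta h_{\alpha\beta}$. This also shows that neither $\omega$ nor $a$ survives. In fact, before projection the $u a$ terms from the symmetrised gradient exactly cancel those from Proposition~\ref{prop:lie-h}, so $\mathcal{L}_u h$ is already spatial. I would note this as a consistency check.

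\emph{The one step needing care.} The only potential obstacle is making sure the decomposition~\eqref{eq:ehlers} is applied with the correct index ordering, $\nabla_\alpha u_\beta$ with $-u_\alpha a_\beta$, and confirming that it is complete. Completeness can be checked by contracting with $u^\alpha$ to recover $a_\beta$, and with $u^\beta$ to get $0$. Once that is settled, the rest is bookkeeping with the projector.
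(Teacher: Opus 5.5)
Your proposal is correct and follows the paper's own proof. You project \eqref{eq:lie-h} term by term, remove the $u_\mu a_\nu$ pieces with $h_\alpha{}^\mu u_\mu = 0$, and drop the vorticity by symmetrising \eqref{eq:ehlers}. Your side remark is also correct: the $ua$ terms of Proposition~\ref{prop:lie-h} already cancel against the $-u_\mu a_\nu - u_\nu a_\mu$ from the symmetrised gradient, so $\mathcal{L}_u h$ is itself spatial and the projection acts trivially.
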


\begin{proof}
Projecting~\eqref{eq:lie-h}: the acceleration terms give
$(h_\alpha{}^\mu u_\mu)a_\beta + a_\alpha(h_\beta{}^\nu u_\nu) = 0$,
since $h_\alpha{}^\mu u_\mu = (g_\alpha{}^\mu + u_\alpha u^\mu)u_\mu
= u_\alpha - u_\alpha = 0$. The remaining terms give
$h_\alpha{}^\mu h_\beta{}^\nu(\nabla_\mu u_\nu + \nabla_\nu u_\mu)
= 2(\sigma_{\alpha\beta} + \frac{1}{3}\theta\,h_{\alpha\beta})$,
where the vorticity $\omega_{\alpha\beta}$ cancels by
symmetrisation.
\end{proof}

The vorticity is absent for the same reason as in the Riemannian
case: the inner product $h(s,s')$ is symmetric in $s$ and $s'$, so
no antisymmetric piece can appear. The acceleration contributes to
the full spacetime Lie derivative~\eqref{eq:lie-h} but is purely
longitudinal (along $u$) and vanishes under spatial projection.

\subsection{Constitutive closure}

The spatial strain rate~\eqref{eq:strain} decomposes into a trace
$\theta$ and a trace-free part $\sigma_{\alpha\beta}$ under
$SO(3)$ acting on the local rest space. Linearity and isotropy give
two viscosity coefficients:
\begin{equation}\label{eq:viscous-stress}
\pi_{\alpha\beta} = -2\eta\,\sigma_{\alpha\beta}
- \zeta\,\theta\,h_{\alpha\beta},
\end{equation}
where $\eta \geq 0$ is the shear viscosity and $\zeta \geq 0$ is the
bulk viscosity. The heat flux $q_\alpha$ is not determined by the
kinematic construction: it depends on the temperature gradient and is
not a strain rate. The separation of the dissipative stress into a
kinematically determined viscous sector ($\pi$, $\Pi$) and a
kinematically unconstrained thermal sector ($q$) is a prediction of
the approach, not an assumption.

\section{Material frame-indifference}
\label{sec:MFI}

The Principle of Material Frame-Indifference requires the
constitutive law to be invariant under superposed rigid-body motions.
In the relativistic setting, rigid motions are generated by Killing
fields $\xi^\alpha$ satisfying $\mathcal{L}_\xi g = 0$.

\begin{proposition}\label{prop:MFI}
Under a Killing perturbation $u^\alpha \to \bar{u}^\alpha
= u^\alpha + \epsilon\xi^\alpha$ (with $u_\gamma\xi^\gamma = 0$ to
preserve normalisation), the spatial strain rate changes by
\begin{equation}\label{eq:MFI-violation}
\delta\mathfrak{h}_{\alpha\beta}
= +\epsilon(\xi_\alpha a_\beta + \xi_\beta a_\alpha),
\end{equation}
where $a_\alpha = u^\gamma\nabla_\gamma u_\alpha$ is the acceleration
of the unperturbed flow.
\end{proposition}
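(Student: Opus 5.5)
The plan is to linearise everything in $\epsilon$ and read off the $O(\epsilon)$ part of $\mathfrak{h}_{\alpha\beta}$. The condition $u_\gamma\xi^\gamma=0$ keeps $\bar u^\alpha\bar u_\alpha=-1$ to first order, so Proposition~\ref{prop:lie-h} still applies to the perturbed flow.

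\textbf{Step 1: perturbed spatial metric.} Write
\[
\bar h_{\alpha\beta}=h_{\alpha\beta}+\epsilon(u_\alpha\xi_\beta+\xi_\alpha u_\beta)+O(\epsilon^2).
\]

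\textbf{Step 2: perturbed Lie derivative.} Using bilinearity of $\mathcal{L}_{X}T$ in $(X,T)$,
\[
\mathcal{L}_{\bar u}\bar h=\mathcal{L}_u h+\epsilon\bigl[\mathcal{L}_\xi h+\mathcal{L}_u(u\otimes\xi+\xi\otimes u)\bigr]+O(\epsilon^2).
\]

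\textbf{Step 3: the $\mathcal{L}_\xi h$ piece.} Since $\mathcal{L}_\xi g=0$, this is where the Killing property enters:
\[
\mathcal{L}_\xi h_{\alpha\beta}=u_\alpha\,\mathcal{L}_\xi u_\beta+u_\beta\,\mathcal{L}_\xi u_\alpha .
\]
Each term carries a free factor of $u$, so it is annihilated by the spatial projectors, exactly as the acceleration terms were in the proof of Theorem~\ref{thm:strain}.

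\textbf{Step 4: the $\mathcal{L}_u(u\otimes\xi+\xi\otimes u)$ piece.} By the Leibniz rule and $\mathcal{L}_u u_\alpha=a_\alpha$ (from the proof of Proposition~\ref{prop:lie-h}),
\[
\mathcal{L}_u(u_\alpha\xi_\beta+\xi_\alpha u_\beta)=a_\alpha\xi_\beta+\xi_\alpha a_\beta+u_\alpha\,\mathcal{L}_u\xi_\beta+u_\beta\,\mathcal{L}_u\xi_\alpha .
\]
After projection with $h_\alpha{}^\mu h_\beta{}^\nu$, the last two terms vanish. Because $u\cdot a=0$ and $u\cdot\xi=0$, the surviving terms $\xi_\alpha a_\beta+\xi_\beta a_\alpha$ are already spatial and pass through the projectors unchanged. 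This produces the claimed $+\epsilon(\xi_\alpha a_\beta+\xi_\beta a_\alpha)$.

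\textbf{Step 5: variation of the projectors (expected main obstacle).} The projectors $\bar h_\alpha{}^\mu=h_\alpha{}^\mu+\epsilon(u_\alpha\xi^\mu+\xi_\alpha u^\mu)$ also vary, and I must show this contributes nothing beyond the result above.

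\emph{The $\xi_\alpha u^\mu$ part.} This part contracts $u^\mu$ into $(\mathcal{L}_u h)_{\mu\nu}$. From \eqref{eq:lie-h}, using $u^\mu\nabla_\nu u_\mu=0$, $u^\mu\nabla_\mu u_\nu=a_\nu$ and $u\cdot a=0$,
\[
u^\mu(\mathcal{L}_u h)_{\mu\nu}=a_\nu-a_\nu=0 .
\]
So this part drops out identically.

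\emph{The $u_\alpha\xi^\mu$ part.} This part gives terms proportional to $u_\alpha\,\xi^\mu(\mathcal{L}_u h)_{\mu\nu}h_\beta{}^\nu$, which point along the unperturbed $u$ in a free index. I first try to show these are purely longitudinal corrections that only reflect the tilt of the rest frame from $u$ to $\bar u$, not a genuine change of spatial strain. To make this precise, I compare $\mathfrak{h}$ and $\bar{\mathfrak h}$ in the common spatial frame by projecting $\delta\mathfrak{h}$ with the unperturbed $h$, which kills these terms at $O(\epsilon)$.

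If that framing proves inadequate, my fallback is to check whether these pieces combine with the $u_\alpha\mathcal{L}_u\xi_\beta$ terms from Step~4 into a total vanishing at first order, using Killing's equation $\nabla_{(\alpha}\xi_{\beta)}=0$.

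\textbf{Step 6: final remark.} Once the projector bookkeeping is settled, I note that the residual $\epsilon(\xi_\alpha a_\beta+\xi_\beta a_\alpha)$ vanishes when $a=0$ (geodesic flow). It also vanishes when the isometry preserves the flow, since in that case the contributions through $\mathcal{L}_\xi u$ and $\mathcal{L}_u\xi$ can be traded against one another; this anticipates the restoration statement mentioned in the abstract.
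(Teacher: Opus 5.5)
Your argument is correct provided you commit to the first option in Step~5, and it is organised differently from the paper's.

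\textbf{How the two routes differ.} The paper never varies the projectors. It evaluates $\bar{\mathfrak h}$ on the boosted rest-space vectors $\bar w^\alpha=w^\alpha+\epsilon(w\cdot\xi)u^\alpha$, which are $\bar u$-orthogonal to first order, so only the symmetric gradient $\nabla_\alpha\bar u_\beta+\nabla_\beta\bar u_\alpha$ is needed. Killing's equation removes the $\xi$-gradient, and the acceleration comes from the tilt $\epsilon(w\cdot\xi)u^\alpha$ acting on $\nabla_\alpha u_\beta+\nabla_\beta u_\alpha$, via $u^\alpha(\nabla_\alpha u_\beta+\nabla_\beta u_\alpha)=a_\beta$. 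You instead linearise $\mathcal{L}_{\bar u}\bar h$ as a tensor. The Killing property then enters through $\mathcal{L}_\xi g=0$, and the acceleration through $\mathcal{L}_u u_\alpha=a_\alpha$ in the variation of $\bar h$.

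\textbf{What each route buys.} The paper's route is shorter and makes the relativity-of-simultaneity reading immediate. Yours exposes the full first-order change,
\[
\bar{\mathfrak h}_{\alpha\beta}-\mathfrak h_{\alpha\beta}=\epsilon\bigl(\xi_\alpha a_\beta+\xi_\beta a_\alpha+u_\alpha\xi^\mu\mathfrak h_{\mu\beta}+u_\beta\xi^\mu\mathfrak h_{\mu\alpha}\bigr)+O(\epsilon^2).
\]
This shows that the proposition is really a statement about rest-space components, which the paper leaves implicit.

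\textbf{What to fix in Step~5.} The mixed terms above cannot cancel. They are exactly what makes $\bar u^\alpha\bar{\mathfrak h}_{\alpha\beta}=O(\epsilon^2)$, so your fallback cannot succeed and should be dropped. Your primary framing is the right one, and you can justify it rather than simply adopt it. Because $\mathfrak h(u,\cdot)=0$, which you showed in Step~5, $\bar{\mathfrak h}(u,\cdot)=O(\epsilon)$, and hence $\bar{\mathfrak h}(\bar w,\bar z)=\bar{\mathfrak h}(w,z)+O(\epsilon^2)$. Projecting the difference with the unperturbed $h$ therefore coincides, at first order, with the paper's boosted-vector comparison.

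\textbf{Step~6.} The claim that the residual vanishes for flow-preserving isometries does not follow. The residual $\epsilon(\xi_\alpha a_\beta+\xi_\beta a_\alpha)$ no longer contains $\mathcal{L}_\xi u$ or $\mathcal{L}_u\xi$, since you projected those away. A symmetrised product of two vectors vanishes only where one factor does. For example, take static observers in Schwarzschild with $\xi=\partial_\phi$. Then $\mathcal{L}_\xi u=0$, $u\cdot\xi=0$ and $a\neq0$, yet the residual is nonzero: it is the shear of the resulting differential rotation.
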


\begin{proof}
The perturbed flow has spatial vectors shifted into the new frame:
$\bar{w}^\alpha = w^\alpha + \epsilon(w^\gamma\xi_\gamma)u^\alpha$.
The perturbed strain acting on spatial vectors evaluates to
\begin{align}
\bar{\mathfrak{h}}_{\alpha\beta}\bar{w}^\alpha\bar{z}^\beta
&= (\nabla_\alpha u_\beta + \nabla_\beta u_\alpha)w^\alpha z^\beta
+ \epsilon\underbrace{(\nabla_\alpha\xi_\beta
+ \nabla_\beta\xi_\alpha)}_{=0}w^\alpha z^\beta\nonumber\\
&\quad + \epsilon(w^\gamma\xi_\gamma)
\underbrace{u^\alpha(\nabla_\alpha u_\beta
+ \nabla_\beta u_\alpha)}_{a_\beta + 0}z^\beta\nonumber\\
&\quad + \epsilon(z^\gamma\xi_\gamma)
(a_\alpha w^\alpha),
\end{align}
using $\nabla_\alpha\xi_\beta + \nabla_\beta\xi_\alpha = 0$ (Killing
equation) and $u^\alpha\nabla_\beta u_\alpha = 0$ (normalisation).
\end{proof}

The violation~\eqref{eq:MFI-violation} vanishes if and only if the
flow is geodesic ($a_\alpha = 0$) or the Killing field preserves the
flow ($\mathcal{L}_\xi u = 0$). Physically, this reflects the
relativity of simultaneity: a Killing boost tilts the rest-frame
hyperplane, coupling the temporal acceleration into the spatial strain.
In the non-relativistic limit, the acceleration vanishes to leading
order in $v/c$, and MFI is restored for spatial translations and
rotations (but not for Galilean boosts, consistently with classical
continuum mechanics~\cite{MH83}).

\section{Constraints on the BDNK framework}
\label{sec:BDNK}

The BDNK framework~\cite{BDN22,BDNK-review} writes the dissipative
stress-energy tensor as
$T_{\alpha\beta} = \rho\,u_\alpha u_\beta + (p+\Pi)h_{\alpha\beta}
+ 2q_{(\alpha}u_{\beta)} + \pi_{\alpha\beta}$,
with the dissipative quantities expressed as first-order gradient
corrections with coefficients (the BDNK parameters) constrained by
causality and stability.

The kinematic argument provides four structural constraints:

\emph{K1: The viscous sector depends only on $\sigma$ and $\theta$.}
This follows directly from Theorem~\ref{thm:strain}: the spatial
strain rate involves only the symmetric, spatially projected part of
$\nabla u$.

\emph{K2: The heat-flux sector is unconstrained.} The kinematic
construction measures spatial inner products, not temperature
gradients. The heat flux $q_\alpha$ is a thermodynamic quantity
outside the scope of the kinematic argument.

\emph{K3: Isotropy gives two viscosity coefficients.} The
decomposition of the strain rate into trace ($\theta$) and trace-free
($\sigma$) parts under $SO(3)$ gives exactly $\eta$ and $\zeta$.

\emph{K4: Cross-coupling is excluded at first order.} The spatial
strain rate contains no temperature gradient or acceleration, so the
viscous stress cannot depend on $\nabla T$ or $a_\alpha$ at leading
order. BDNK causality-restoring terms that include such couplings are
not of viscous origin but are necessary modifications for
well-posedness.

\section{The non-relativistic limit and the deformation Laplacian}
\label{sec:nonrel}

Hegade K R, Ripley, and Yunes~\cite{HKRRY23} showed that the
non-relativistic ($c\to\infty$) limit of the BDNK equations depends
on the values of the BDNK parameters. This appears to conflict with
the kinematic argument, which selects the deformation Laplacian
uniquely.

\begin{theorem}\label{thm:nonrel}
In the non-relativistic limit of the BDNK equations, the viscous
part of the momentum equation gives the deformation Laplacian
\begin{equation}
\eta_\mathrm{NR}\Delta\vec{v}
+ \bigl(\zeta_\mathrm{NR}
+ \tfrac{1}{3}\eta_\mathrm{NR}\bigr)\nabla(\nabla\cdot\vec{v}),
\end{equation}
regardless of the BDNK parameters. The parameter-dependent terms
identified in~\cite{HKRRY23} arise entirely from the heat-flux
sector.
\end{theorem}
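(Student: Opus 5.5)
The plan is to write the BDNK constitutive relations in their general first-order form and expand the momentum equation order by order in $1/c$. I will split the dissipative corrections into the viscous sector $(\pi_{\alpha\beta},\Pi)$ and the frame/thermal sector, which consists of the heat flux $q_\alpha$ together with the out-of-equilibrium energy-density correction $\mathcal{A}$. In BDNK, $\pi_{\alpha\beta}=-2\eta\sigma_{\alpha\beta}$ carries no free frame parameters, and by K1 and Theorem~\ref{thm:strain} this is the only admissible viscous structure. The bulk piece $\Pi=-\zeta\theta+\chi_1(u^\alpha\nabla_\alpha\rho+(\rho+p)\theta)$ has its frame-dependent part proportional to the leading-order energy equation. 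That part is therefore of higher order on shell. The heat flux is $q_\alpha=\lambda(a_\alpha+h_\alpha{}^\beta\nabla_\beta T/T)$ with the BDNK parameter $\lambda$ (and analogous $\chi$'s in $\mathcal{A}$).

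First, I will take the flat-space limit with $u^\alpha=\gamma(c,\vec v)$ and $\gamma=1+O(v^2/c^2)$, and compute $\sigma_{ij}$ and $\theta$ at leading order. This gives $\theta\to\nabla\cdot\vec v$ and $\sigma_{ij}\to\tfrac12(\partial_iv_j+\partial_jv_i)-\tfrac13\delta_{ij}\nabla\cdot\vec v$. Both follow from Theorem~\ref{thm:strain}, because the spatial strain rate reduces to $\mathfrak h_{ij}=\partial_iv_j+\partial_jv_i$, which is the Riemannian result of~\cite{WB2024}. The acceleration and time-mixing components do not enter, since they were projected out exactly.

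Next, I will take the spatial divergence $-\partial_j\pi_{ij}-\partial_i\Pi$ and simplify it with $\eta_\mathrm{NR},\zeta_\mathrm{NR}$ constant at this order. This gives $\eta\Delta v_i+\eta\partial_i(\nabla\cdot\vec v)-\tfrac23\eta\partial_i(\nabla\cdot\vec v)+\zeta\partial_i(\nabla\cdot\vec v)$, which is exactly the stated operator $\eta_\mathrm{NR}\Delta\vec v+(\zeta_\mathrm{NR}+\tfrac13\eta_\mathrm{NR})\nabla(\nabla\cdot\vec v)$. The $\chi_1$ contribution to $\Pi$ is removed using the zeroth-order continuity and energy equations. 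That substitution is legitimate at first order in gradients and leaves only terms of second order in gradients or suppressed by $1/c^2$.

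Finally, I will collect the remaining terms. These are $2q_{(\alpha}u_{\beta)}$ in the momentum equation, whose time component contributes $\partial_t q_i/c^2$-type terms and whose $u^0$ factors rescale at $O(c)$, plus the $\mathcal{A}u_\alpha u_\beta$ terms. I will show that all $\lambda$- and $\chi$-dependence resides there. In particular, the $\lambda a_\alpha$ piece of $q$ is where the HKRRY parameter dependence enters: substituting the Euler acceleration converts it into $\nabla p$ and $\nabla T$ terms whose coefficients involve the BDNK parameters. I expect this step to be the main obstacle, because it requires careful power counting. I must fix how $\lambda$ and $\chi_i$ scale with $c$, which is the scaling that HKRRY choose, and show that none of these terms can recombine into something of the form $\Delta\vec v$ or $\nabla\nabla\cdot\vec v$. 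If a velocity-Laplacian piece does appear through $a_\alpha\to -\nabla p/\rho$ substitutions, the argument is that it necessarily arrives multiplied by heat-sector coefficients. It therefore belongs to the thermal sector by K2 and K4, and is not a modification of the viscous operator.
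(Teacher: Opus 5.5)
Your viscous computation is the paper's: the same leading-order $\sigma_{ij}$ and $\theta$, and the same divergence giving $\eta_\mathrm{NR}\Delta\vec v+(\zeta_\mathrm{NR}+\tfrac13\eta_\mathrm{NR})\nabla(\nabla\cdot\vec v)$. You should state the $c$-scaling of $\eta,\zeta$ explicitly; in the paper's normalisation $u^\alpha=\gamma(1,v^i/c)$ it is $\eta=c\,\eta_\mathrm{NR}$, $\zeta=c\,\zeta_\mathrm{NR}$.

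The gap is in the attribution half. Your disposal of the $\chi_1$ part of $\Pi$ by on-shell reduction proves too much. The frame part of the BDNK heat flux is, like that term, proportional to an ideal equation of motion. Your own $\lambda(a_\alpha+h_\alpha{}^\beta\nabla_\beta T/T)$ vanishes on ideal solutions when $\rho+p=sT$, with $s$ the entropy density. So the same reduction removes the $\lambda$ channel too. Indeed, if first-order on-shell reduction were valid uniformly as $c\to\infty$, all BDNK frames would share one non-relativistic limit, and there would be no HKRRY dependence to locate. That dependence lives precisely in formally higher-order terms that survive because of how the BDNK coefficients scale with $c$.

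You therefore cannot keep $\lambda$ and drop $\chi_1$ without fixing those scalings. The same applies to the $\chi$-dependence you place in $\mathcal{A}u_\alpha u_\beta$, which is not heat flux at all. Until those channels are shown to vanish, you have at most ``the parameter dependence lies outside $\pi_{\alpha\beta}$'', not ``entirely in the heat-flux sector''. Your fallback, that any surviving $\Delta\vec v$ or $\nabla(\nabla\cdot\vec v)$ piece with a heat-sector coefficient ``belongs to the thermal sector by K2 and K4'', does not close this. K2 and K4 constrain the form of the constitutive law, not which terms survive the limit of the momentum equation.

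What is missing is the paper's explicit, unreduced computation of the mechanism. The term $q_i\supset\kappa_3 a_i$, with $a_i\approx c^{-2}\partial_t v_i$, sits inside the momentum density $T^{0i}\approx\rho c v^i+q^i$. Then $\partial_0=c^{-1}\partial_t$ turns it into $\kappa_3 c^{-3}\partial_t^2 v^i$. This is an anomalous second \emph{time} derivative that enters only through $q$ and is structurally distinct from the spatial viscous operator.

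Replacing $a_\alpha$ by the Euler acceleration, as you plan, has one of two effects. Applied to the whole frame combination, it cancels against its gradient partner and erases the effect, as above. Applied to the acceleration alone, it disguises the effect. At linear order the ideal equations give $\partial_t^2 v^i\approx c_s^2\,\partial^i(\nabla\cdot\vec v)$, with $c_s$ the sound speed. The term then becomes $\kappa_3 c^{-3}c_s^2\,\partial^i(\nabla\cdot\vec v)$, an apparent parameter-dependent shift of the bulk coefficient. That is exactly the recombination you were worried about, produced by the substitution itself.

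Keep $q^i$ unsubstituted inside $\partial_0 T^{0i}$. If you retain the BDNK $\chi$ terms in $\Pi$ and $\mathcal{A}$, you take on an obligation the paper's proof does not, since it works with $\Pi=-\zeta\theta$. You must then show by explicit $c$-counting that those terms drop out.
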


\begin{proof}
Let $u^\alpha = \gamma(1, v^i/c)$ with $\gamma \approx 1
+ v^2/(2c^2)$. The spatial shear tensor expands as
\begin{equation}
\sigma_{ij} \approx \frac{1}{c}\left(\frac{1}{2}(\partial_i v_j
+ \partial_j v_i) - \frac{1}{3}\delta_{ij}\nabla\cdot\vec{v}\right).
\end{equation}
The relativistic transport coefficients scale as $\eta = c\eta_\mathrm{NR}$
and $\zeta = c\zeta_\mathrm{NR}$ (matching the dimensions of the
macroscopic viscous force). The spatial divergence of the viscous
stress evaluates to
\begin{equation}
\partial_j\pi^{ji}_\mathrm{visc}
= \partial_j\bigl(-2(c\eta_\mathrm{NR})\tfrac{1}{c}
\sigma^{ij}_\mathrm{NR}\bigr)
= -\eta_\mathrm{NR}\bigl(\Delta v^i
+ \tfrac{1}{3}\partial^i(\nabla\cdot\vec{v})\bigr).
\end{equation}
Combined with $\partial^i\Pi = -\zeta_\mathrm{NR}\partial^i
(\nabla\cdot\vec{v})$, this gives the standard deformation Laplacian
form with no BDNK parameters.

The BDNK parameter dependence enters through the heat-flux sector.
The causality-restoring parameter $\kappa_3$ couples the heat flux to
the acceleration: $q_i \supset \kappa_3\,a_i$. The spatial
acceleration is $a_i \approx c^{-2}\partial_t v_i$, so
$q_i \sim \kappa_3 c^{-2}\partial_t v_i$. Since
$T^{0i} \approx \rho c v^i + q^i$, the time derivative
$\partial_0 = c^{-1}\partial_t$ acts on the heat flux, contributing
to the momentum equation at $O(1)$:
$\partial_0 T^{0i} \supset c^{-1}\partial_t q^i
\approx \kappa_3 c^{-3}\partial_t^2 v^i$.
If $\kappa_3$ is scaled to survive the $c\to\infty$ limit, it leaves
an anomalous $\partial_t^2 v^i$ term. The HKRRY parameter dependence
arises exclusively from this thermal time-inertia.
\end{proof}

\section{Gravitational-wave damping from kinematic strain}
\label{sec:GW}

The kinematic construction makes no distinction between fluid and
geometric contributions to the strain rate. On a dynamical spacetime,
the total $\mathcal{L}_u h$ includes both. We demonstrate this by
deriving the Weinberg gravitational-wave damping formula.

Consider a comoving fluid ($u^\alpha = (1,0,0,0)$) in a flat FRW
background perturbed by a transverse-traceless gravitational wave:
$ds^2 = -dt^2 + a^2(t)(\delta_{ij} + h^{TT}_{ij})dx^i dx^j$.
The spatial metric is $h_{ij} = a^2(\delta_{ij} + h^{TT}_{ij})$.

The Lie derivative along the comoving flow reduces to the partial time
derivative:
\begin{equation}
\mathcal{L}_u h_{ij} = \partial_t h_{ij}
= 2a\dot{a}\delta_{ij} + 2a\dot{a}h^{TT}_{ij}
+ a^2\dot{h}^{TT}_{ij}.
\end{equation}
The trace gives the pure expansion $\theta = 3\dot{a}/a = 3H$
(the gravitational wave is traceless and does not contribute).
Subtracting the trace isolates the shear:
\begin{equation}\label{eq:GW-shear}
2\sigma_{ij} = \mathcal{L}_u h_{ij}
- \tfrac{2}{3}\theta\,h_{ij} = a^2\dot{h}^{TT}_{ij},
\end{equation}
so $\sigma_{ij} = \frac{1}{2}a^2\dot{h}^{TT}_{ij}$. Adding a small
fluid velocity $\delta v^i$ gives the superposition
$\sigma_{ij} = \frac{1}{2}a^2\dot{h}^{TT}_{ij}
+ a^2\sigma^{\mathrm{fluid}}_{ij}$.

The viscous dissipation rate is
$\Phi = -\pi_{\alpha\beta}\sigma^{\alpha\beta}
= 2\eta\sigma_{ij}\sigma^{ij}$. Contracting with the background
metric $a^{-4}\delta^{ik}\delta^{jl}$:
\begin{equation}\label{eq:weinberg}
\Phi = 2\eta\left(\tfrac{1}{2}a^2\dot{h}^{TT}_{ij}\right)
\left(\tfrac{1}{2}a^2\dot{h}^{TT}_{kl}\right)
a^{-4}\delta^{ik}\delta^{jl}
= \tfrac{1}{2}\eta\,\dot{h}^{TT}_{ij}\dot{h}^{ij}_{TT}.
\end{equation}
This is the standard Weinberg formula~\cite{Weinberg72}, here
derived directly from the kinematic strain rate without any ad hoc
decomposition of the strain into fluid and geometric parts. The
kinematic construction automatically includes both contributions.

\section{Discussion}
\label{sec:discussion}

\subsection{Frame dependence}

The kinematic construction uses $u^\alpha$ to define the flow and
hence the connecting vectors. Different hydrodynamic frame choices
(Eckart versus Landau) give different $u$, different $h$, and hence
different $\sigma$ and $\theta$. The kinematic argument constrains
the form of the viscous stress in any frame but does not select a
preferred frame. In the non-relativistic limit, the Eckart and Landau
frames coincide to leading order in $v/c$; the frame dependence
enters at the order of relativistic corrections, in the thermal
sector.

\subsection{Relativistic thin shells}

The Riemannian thin-shell decomposition~\cite{WB2024-thinshell}
showed that the effective viscous operator on a surface depends on
the boundary condition. The relativistic analogue involves a timelike
shell $\Sigma$ with the Israel junction
conditions~\cite{Israel66} playing the role of the boundary condition.
A Gauss-Weingarten decomposition of the viscous operator
$\nabla^\alpha\pi_{\alpha\beta}$ shows that the intrinsic piece on
$\Sigma$ is the 2+1 dimensional deformation Laplacian, with boundary
terms generated by the Israel junction conditions. The detailed
analysis of specific backgrounds (Schwarzschild, FRW) and the
physical applications (accretion discs, neutron-star surfaces) are
left for future work.

\end{document}